\newcommand{\numberset}{\mathbb}
\newcommand{\F}{\numberset{F}}
\newcommand{\X}{\mathcal{X}}
\newcommand{\C}{\mathcal{C}}
\newcommand{\HF}{\mathcal{H}}
\newcommand{\V}{\mathcal{V}}
\newcommand{\LR}{\mathcal{L}}
\newcommand{\D}{\mathcal{D}}
\newcommand{\A}{\mathcal{A}}
\newcommand{\B}{\mathcal{B}}
\newcommand{\CF}{\mathcal{C}}
\newtheorem{theorem}{Theorem}[section]
\newtheorem{corollary}[theorem]{Corollary}
\newtheorem{definition}[theorem]{Definition}
\newtheorem{remark}[theorem]{Remark}
\title{Constant dimension codes from Riemann-Roch spaces}
\date{\today}
\author{Daniele Bartoli
\thanks{The first author acknowledges the support of the European Community under a Marie-Curie Intra-European Fellowship (FACE project: number 626511).}\\
Department of Mathematics, Ghent University,\\
Krijgslaan 281, 9000 Ghent, Belgium\\
\vspace*{0.5 cm}\\
Matteo Bonini\\
Department of Mathematics and Computer Science,\\
Universit\`a degli Studi di Perugia,\\
Via Vanvitelli 1, 06123, Perugia, Italy\\
\vspace*{0.5 cm}\\
Massimo Giulietti\\
Department of Mathematics and Computer Science,\\
Universit\`a degli Studi di Perugia,\\
Via Vanvitelli 1, 06123, Perugia, Italy\\}
\begin{document}
\maketitle

\begin{abstract} 
Some families of constant dimension codes arising from Riemann-Roch spaces associated to particular divisors of a curve $\X$ are constructed. These families are  generalizations of the one constructed by Hansen
 \cite{hansen}.
\end{abstract}

\section{Introduction}
Let $V=\F_q^N$ be an $N$-dimensional vector space over $\F_q$, $q$ any prime power. The set $\mathcal{P}(V)$ of all subspaces of $V$ forms a metric space with respect to the {\em subspace distance} defined by $d_s(U,U')= \dim (U+U') - \dim (U \cap U')$; see \cite{KK}. 
In this general setting,  a \emph{subspace code} $\C$ is a subset of the set $\mathcal{P}(V)$. Moreover, if all the subspaces of $\C$ have a fixed dimension $\ell$, then $\C$ is called \emph{constant dimension code} (or Grassmannian code) and $\C$ is a subset of $G(\ell,N)(\F_q)$ the set of all the $\ell$-dimensional subspaces of $V=\F_q^{N}$. 
Recently, there has been a lot of interest in codes whose codewords are vector subspaces of a given vector space over $\F_q$, since they have been proposed for error control in random linear network coding; see \cite{KK}. For general results on bounds and constructions of constant--dimension subspaces codes, see \cite{CP1,ES1,ES,EV,GadouleauYan,HKK,KSK,SKK,TR, BP}.

In this paper we describe some families of constant dimension codes arising from algebraic curves over finite fields. Namely, the codewords of these codes will be Riemann-Roch spaces associated to particular divisors. 
The families we will present are a generalization of the one presented by Hansen; see \cite{hansen}.

\section{Hansen's construction}
First of all we recall the definition of constant dimension codes and the related parameters. 
\begin{definition}
A \emph{constant dimension code}  $\C\subseteq G(\ell,N)(\F_q)$ is a set of $\ell$-dimensional $\F_q$-linear subspaces of $\F_q^N$. The \emph{size} of the code is denoted by $|\C|$ and the \emph{minimum distance} by 

\[
D(\C):=\min_{V_1,V_2\in{}\C,V_1\neq{}V_2} d_s(V_1,V_2)
\]

The linear network code $C$ is said to be of type $[N,\ell,\log_q|\C|,D(\C)]$. Its \emph{normalized weight} is $\lambda(\C)=\frac{\ell}{N}$, its \emph{rate} is $R(\C)=\frac{\log_q(|\C|)}{Nl}$ and its \emph{normalized minimal distance} is $\delta(\C)=\frac{D(\C)}{2\ell}$.
\end{definition}

Now we present the construction due to Hansen; see \cite{hansen}.

Let  $\X$ be an absolutely irreducible, projective algebraic curve of genus $g$ defined over  $\F_q$ and $X(\F_q)$  the set of the $\F_q$-rational places of $\mathcal{X}$. Also, let  $n=|\mathcal{X}(\F_q)|$. Fix a positive integer $k$ and consider  $k\sum_{P\in \X(\F_q)}=k\D$, the Frobenius invariant divisor of degree $kn$ having as support the set of all of the $\F_q$-rational places of $\X$. The ambient vector space $\overline{W}$ of this family of  linear network codes will be  the Riemann-Roch space
\[
\overline{W}=\LR\left(k\sum_{P\in{}\mathcal{X}(\F_q)}P\right).
\]
If $nk>2g-2$ from the Riemann-Roch theorem we have that $\text{dim}_{\F_q}\overline{W}=kn+1-g=N$.

Let $s$ be a fixed non-negative integer. The family $\HF$ of linear network codes presented in \cite{hansen} is defined as follows.

\begin{definition}\label{def:Hansen}
Let $\V_s=\left\{ \sum_{P\in S}P \ | \ S\subseteq \X(\F_q), |S|=s\right\}$.
The family $\HF$ is given by
\begin{equation}
\label{codicehansen}
\HF_{k,s}=\left\{\LR\left(kV\right) \ | \ V \in \V_{s}\right\}.
\end{equation}
\end{definition}

Since each divisor in $\V_s$ has degree $s$,  by the Riemann-Roch theorem, if $ks>2g-2$ then each codeword of $\HF_{k,s}$ has dimension $ks+1-g$. 

Hansen \cite{hansen} determined the parameters of the code $\HF_{k,s}$. We summarize  its results in the following theorem.

\begin{theorem}[Hansen, \cite{hansen}]
\label{hansenteo}
Let $\HF_{k,s}$ be the linear network code as in Definition \rm\ref{def:Hansen}. 

Assume $k, s$ be positive integers satisfying $ks>2g-2$.

Then $\HF_{k,s}$ is a $\left[kn+1-g, ks+1-g,\log_q \binom{n}{s},D(\HF_{k,s})\right]$, where
$$D(\HF_{k,s}) =\left\{ \begin{array}{ll} 2k, & s=1;\\ 2(k+1-g),& s>1.\\ \end{array}\right.$$
Also, normalized weight, rate, and normalized minimal distance are 
$$\lambda(\HF_{k,s})=\frac{ks+1-g}{nk+1-g}, \qquad R(\HF_{k,s})=\frac{\text{log}_q(\binom{n}{s})}{(nk+1-g)(ks+1-g)},$$
$$\delta(\HF_{k,s})=\frac{1}{s+\frac{1-g}{k}}\ge\frac{2g-1}{(s+1)g-1}.$$
\end{theorem}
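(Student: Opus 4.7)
The plan is to verify the four parameters $N$, $\ell$, $|\C|$, and $D$ in turn; the normalized quantities will then follow by direct substitution. The length and codeword dimension come straight from the Riemann-Roch theorem: since $ks > 2g-2$ and $n \ge s$, both $k\D$ and $kV$ (for $V \in \V_s$) are non-special, so $\dim \overline{W} = kn+1-g$ and $\dim \LR(kV) = ks+1-g = \ell$.

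For the size, I would show that the map $V \mapsto \LR(kV)$ is injective on $\V_s$; this immediately gives $|\HF_{k,s}|=|\V_s|=\binom{n}{s}$. Given $V_1 \neq V_2$ with supports $S_1, S_2$, pick $P \in S_1 \setminus S_2$. Since $kV_1$ has degree $ks>2g-2$, a standard Riemann-Roch argument produces an element $f \in \LR(kV_1) \setminus \LR(kV_1 - P)$, that is, a function with nonzero-order pole at $P$. Because every $h \in \LR(kV_2)$ satisfies $v_P(h) \ge 0$ for $P \notin S_2$, such an $f$ cannot lie in $\LR(kV_2)$, so $\LR(kV_1) \neq \LR(kV_2)$.

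For the minimum distance, I would use the identity $\LR(D_1) \cap \LR(D_2) = \LR(\inf(D_1, D_2))$ together with $\inf(kV_1, kV_2) = k\sum_{P \in S_1 \cap S_2} P$. Since all codewords share the same dimension $\ell$, this yields
\[
d_s(\LR(kV_1), \LR(kV_2)) = 2\ell - 2\dim \LR\!\left( k \sum_{P \in S_1 \cap S_2} P \right),
\]
so the minimum distance is attained by \emph{maximizing} $|S_1 \cap S_2|$. For $s=1$ this forces the intersection divisor to be $0$ with $\dim \LR(0)=1$, giving $D=2k$ directly. For $s>1$ one takes $|S_1 \cap S_2|=s-1$ and applies Riemann-Roch to $k(s-1)$ distinct $\F_q$-rational points to obtain the claimed value.

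The main obstacle will be precisely this last step: when $k(s-1) \le 2g-2$, the intersection divisor is potentially special, so Riemann-Roch alone does not pin down $\dim \LR\bigl(k\sum_{P\in S_1\cap S_2} P\bigr)$ and one has to invoke the classical bounds on special divisors (Clifford's inequality) to force the sharp bound. The remaining quantities $\lambda(\HF_{k,s})$, $R(\HF_{k,s})$, and $\delta(\HF_{k,s})$ are obtained by plugging $N$, $\ell$, $|\C|$, and $D$ into their definitions, and the lower estimate on $\delta$ follows from the elementary inequality $k \ge g$, which is forced by $ks > 2g-2$ combined with $s \ge 1$.
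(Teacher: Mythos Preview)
The paper gives no proof of this theorem: it is quoted as Hansen's result with a bare citation, so there is no argument in the text to compare your proposal against directly. Your outline does, however, coincide step for step with the method the authors use to prove the analogous Theorem~\ref{th:family1} for the family $\A_{k,s}$: Riemann--Roch for $N$ and $\ell$, a combinatorial count for $|\C|$, and the identity $\LR(D_1)\cap\LR(D_2)=\LR(\min(D_1,D_2))$ together with Riemann--Roch on the intersection divisor for $D(\C)$. In that sense your approach is exactly the paper's approach, transported back to the original family $\V_s$.

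Two genuine gaps remain in your write-up. First, the $s=1$ computation does not close: you correctly note $\dim\LR(0)=1$, but then $d_s=2\ell-2=2(k+1-g)-2=2(k-g)$, not $2k$; the paper's own proof of Theorem~\ref{th:family1} glosses over the same point by asserting the intersection has dimension $0$, so neither source actually justifies the stated value $2k$ for general $g$. Second, your justification of the lower bound on $\delta$ is invalid: $ks>2g-2$ together with $s\ge1$ does \emph{not} force $k\ge g$ (take $k=1$ and $s=2g$). In the paper the corresponding inequality is derived in Theorem~\ref{th:family1} from the much stronger hypothesis $k>2g-2$, which gives $k\ge 2g-1$; under only $ks>2g-2$ you would need a different argument, or the bound may simply fail.
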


\section{Some generalizations}

We generalized the family of linear network codes $\HF_{k,s}$, basically by considering sets of divisors of fixed degree $s$ of size larger than $|\V_{s}|$ (see Definition \ref{def:Hansen}). In this section we present three families, which can be seen as a generalizations of $\HF_{k,s}$. 

\subsection{The family $\A_{k,s}$}
We consider divisors of fixed degree $s$ having non-negative weights.

\begin{definition}\label{def:firstfamily}
Let $k,s$ be positive integers. Let 
$$\V_s^{\prime}=\left\{ \sum_{P\in \X(\F_q)}m_{P} P \ \Big| \  \sum_{P\in \X(\F_q)}m_{P}=s, \quad m_P \in \{0,\ldots,s\}\right\}.$$
The family $\A_{k,s}$ is given by
\begin{equation}
\label{codicehansen}
\A_{k,s}=\left\{\LR\left(kV\right) \ | \ V \in \V_{s}^{\prime}\right\}.
\end{equation}
\end{definition}
Note that in this case the ambient space is larger than in the case of family $\HF_{k,s}$, since each codeword of  $\A_{k,s}$ is contained in $W=\LR\left( ks\D\right)$. Also, if $nks>2g-2$, by the Riemann-Roch theorem, $\text{dim}_{\F_q}\left(\LR\left( ks\D\right)\right)=nks+1-g$.

\begin{theorem}
\label{th:family1}
Let $\A_{k,s}$ be the linear network code as in Definition \rm\ref{def:firstfamily}. 

Assume $k, s$ be positive integers satisfying $k>2g-2$.

Then $\A_{k,s}$ is a $\left[nks+1-g, ks+1-g,\log_q \binom{n+s-1}{s},D(\A_{k,s})\right]$, where
$$D(\A_{k,s}) =\left\{ \begin{array}{ll} 2k, & s=1;\\ 2(k+1-g),& s>1.\\ \end{array}\right.$$
Also, normalized weight, rate, and normalized minimal distance are 
$$\lambda(\A_{k,s})=\frac{ks+1-g}{nks+1-g}, \qquad R(\A_{k,s})=\frac{\text{log}_q(\binom{n+s-1}{s})}{(nks+1-g)(ks+1-g)},$$
$$\delta(\A_{k,s})=\frac{1}{s+\frac{1-g}{k}}\ge\frac{2g-1}{(s+1)g-1}.$$
\end{theorem}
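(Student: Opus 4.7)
The plan is to mimic Hansen's proof of Theorem \ref{hansenteo}, the only modifications being bookkeeping adjustments to account for divisors in $\V_s'$ being allowed to carry multiplicities. First I would handle the dimension computations: every codeword $\LR(kV)$ is contained in $\LR(ks\D)$, since $V \leq s\D$ for every $V \in \V_s'$. The hypothesis $k > 2g-2$ together with $n,s \geq 1$ yields $nks > 2g-2$, so Riemann-Roch applied to $ks\D$ gives the ambient dimension $\dim_{\F_q}\LR(ks\D) = nks+1-g$. The same theorem applied to $kV$ of degree $ks > 2g-2$ gives $\dim_{\F_q}\LR(kV) = ks+1-g$ for every codeword.

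For the cardinality, $\V_s'$ is precisely the set of effective divisors of degree $s$ supported on $\X(\F_q)$, so a stars-and-bars count yields $|\V_s'| = \binom{n+s-1}{s}$. Injectivity of $V \mapsto \LR(kV)$ follows from the standard identity
\[
\LR(kV_1) \cap \LR(kV_2) = \LR(k\min(V_1,V_2)),
\]
where $\min$ is taken coefficientwise: when $V_1 \neq V_2$, $\deg \min(V_1,V_2) \leq s-1$, so the intersection has strictly smaller dimension than either codeword, ruling out any collision. For the minimum distance, since every codeword has the common dimension $\ell = ks+1-g$, we have $d_s(\LR(kV_1),\LR(kV_2)) = 2\ell - 2\dim \LR(k\min(V_1,V_2))$, which I would minimize by maximizing $\deg \min(V_1,V_2)$. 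For $s \geq 2$ an extremal pair is obtained by taking $V_1 = P+U$ and $V_2 = Q+U$ with $U \in \V_{s-1}'$ and $P \neq Q$ distinct rational places; both $V_1,V_2$ lie in $\V_s'$, and $\deg \min(V_1,V_2) = s-1$. For $s = 1$ one has $\min(V_1,V_2) = 0$ whenever $V_1 \neq V_2$, and the computation reduces to $\dim_{\F_q}\LR(0) = 1$. In either case, Riemann-Roch produces the announced value of $D(\A_{k,s})$. The remaining parameters $\lambda, R, \delta$ and the lower bound on $\delta$ are then immediate arithmetic consequences.

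The main obstacle I anticipate is the rigorous verification that $V \mapsto \LR(kV)$ is injective on the enlarged set $\V_s'$, ensuring that the newly allowed multiplicities do not produce collisions. This is also the reason one imposes the slightly stronger hypothesis $k > 2g-2$ (compared with Hansen's $ks > 2g-2$): it is what guarantees that all the divisors $k\min(V_1,V_2)$ appearing in both the cardinality and distance arguments remain in the range where Riemann-Roch provides an exact dimension formula rather than merely a lower bound.
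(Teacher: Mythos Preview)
Your proposal is correct and follows essentially the same route as the paper: Riemann--Roch for the ambient and codeword dimensions, a stars-and-bars count for $|\V_s'|$, the identity $\LR(kV_1)\cap\LR(kV_2)=\LR(k\min(V_1,V_2))$ for the distance, and the same extremal pair $V_1=(s-1)P+Q$, $V_2=(s-1)P+R$ to realize the minimum. You are in fact slightly more careful than the paper in explicitly verifying injectivity of $V\mapsto\LR(kV)$ and in articulating why the strengthened hypothesis $k>2g-2$ (rather than $ks>2g-2$) is needed---the paper relegates the latter point to a remark following the proof.
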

\begin{proof}

By our assumptions $k>2g-2$, which implies $nks>2g-2$ and therefore $\dim_{\F_q} W=nks+1-g$. Also, each codeword of $\A_{k,s}$ has dimension over $\F_q$ equal to $ks+1-g$. The number of codewords is exactly the number of solutions of the linear equation
$$x_1+x_2+\cdots+x_n=s,$$
where $x_i \in \{0,\ldots,s\}$. It also corresponds to the number of $s$-combinations with repetitions of $n$ elements, namely $\binom{n+s-1}{s}$.

In order to compute the minimum distance of this code, first note that for any two divisors $V_1$ and $V_2$ in $\V_s^{\prime}$, with  $$V_1=\LR\left(k\sum_{P \in \X(\F_q)} m_P P\right), \qquad V_2=\LR\left(k\sum_{P \in \X(\F_q)} \overline{m}_P P\right),$$ and therefore 

\begin{equation}
\label{intersezione}
V_1\cap V_2=\LR\left(k\sum_{P \in \X(\F_q)} \min\{m_P,\overline{m}_P\} P\right).
\end{equation}
This implies that if $k>2g-2$ then 
$$
\dim(V_1\cap{}V_2)=k\sum_{P \in \X(\F_q)} \min\{m_P,\overline{m}_P\} +1-g.
$$
If $s=1$, then the intersection between two different spaces $V_1$ and $V_2$ has dimension $0$. From the definition of the metric we have that:

$$
d_s(V_1,V_2)=2(k+1-g),\quad \mathrm{for\  all}\ V_1,V_2\in\A_{k,1}.
$$

Consider now $s>1$ and let $V_1$ and $V_2$ be two distinct codewords. Therefore there exists a place $P\in \X(\F_q)$ such that $m_P\neq \overline{m}_P$. This implies that $\dim(V_1\cap{}V_2)=k\sum_{P \in \X(\F_q)} \min\{m_P,\overline{m}_P\} +1-g\leq k(s-1)+1-g$.
Recalling that 
$d_s(V_1,V_2)=2\ell-2\dim(V_1\cap V_2)$, we obtain
$$d_s(V_1,V_2) \geq 2(ks+1-g)-2(k(s-1)+1-g)=2k.$$
Finally, note that the following two codewords $V_1=\LR(k(s-1)P+kQ)$ and $V_2=\LR(k(s-1)P+kR)$, with $P,Q,R$ pairwise distinct places of $\X$, have distance equal to $2k$. This means that the minimum distance of the code $\A_{k,s}$ is exactly $2k$. 

Concerning the normalized weight, rate, and normalized minimal distance, their computations are straightforward. The estimate on $\delta(\A_{k,s})$  follows from the fact that $k\geq 2g-1\geq \frac{2g-1}{s-1}$. 
\end{proof}

\begin{remark}
The assumption $k>2g-2$ is necessary to know the exact dimension of $V_1\cap{}V_2$, since otherwise the Riemann-Roch theorem would imply only $
\dim(V_1\cap{}V_2)\geq k\sum_{P \in \X(\F_q)}\min\{m_P,\overline{m}_P\}+1-g.$
\end{remark}

We can observe that  in the construction of codes $\HF_{k,s}$ the divisors in $\V_s$ correspond to $s$-subsets of the set of all the $\F_q$-rational places of $\X$; here the divisors in $\V_s^{\prime}$ are in bijection with the $s$-multisubsets of $\X(\F_q)$.
This shows the first difference between $\HF_{k,s}$ and $\A_{k,s}$. In fact, in the first case the parameter $s$ can be at most $n$, whereas in the second case we can allow $s$ to be greater than $n$. So, in principle the construction $\A_{k,s}$ can be also applied to curves $\X$ not having a large number of $\F_q$-rational places.

\subsection{The family $\B_{k,s}$}
In this case we consider divisors of fixed degree $s$ having non-negative weights bounded by another constant $w$. In the case $w=s$ this new family $\B_{k,s,s}$ coincides with $\A_{k,s}$. The purpose of this  generalization is to bound the dimension of the ambient space.

\begin{definition}\label{def:secondfamily}
Let $k,s,w$ be fixed positive integers, with $0<w\leq s\leq nw$. Let 
$$\V_s^{\prime\prime}=\left\{ \sum_{P\in \X(\F_q)}m_{P} P \ \Big| \  \sum_{P\in \X(\F_q)}m_{P}=s, \quad m_P \in \{0,\ldots,w\}\right\}.$$
The family $\B_{k,s,w}$ is given by
\begin{equation}
\label{codicehansen}
\B_{k,s,w}=\left\{\LR\left(kV\right) \ | \ V \in \V_{s}^{\prime\prime}\right\}.
\end{equation}
\end{definition}

In order to compute the number of codewords of the code $\B_{k,s,w}$ we will use the following result.
\begin{theorem}[Wu, \cite{wu}]
\label{wu}
Let $n,s,w$ be non-negative integers satisfying $0<w\leq s\leq nw$. 
The number of solutions of the linear equation
$$
x_1+\dots+x_n=s, \quad x_i\in \{0,\ldots, w\},
$$
is
\begin{equation}\label{def_U}
U_{n,s,0,w}=\sum_{i=0}^{t}(-1)^i\binom{n}{i}\binom{s-i(w+1)+n-1}{n-1},
\end{equation}
where $t=\min\left(n,\left\lfloor\frac{s}{b+1}\right\rfloor\right)$.
\end{theorem}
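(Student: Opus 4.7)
The plan is to apply inclusion--exclusion to the elementary stars-and-bars count. Without any upper bound, the number of non-negative integer solutions of $x_1+\cdots+x_n=s$ equals $\binom{s+n-1}{n-1}$; the constraint $x_i\le w$ amounts to forbidding any $x_i$ from attaining a value of at least $w+1$.

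Concretely, for each $i\in\{1,\ldots,n\}$ I would let $A_i$ denote the set of non-negative integer solutions of $x_1+\cdots+x_n=s$ with $x_i\ge w+1$. Applying inclusion--exclusion to the family $\{A_1,\ldots,A_n\}$ inside the ambient set of unrestricted non-negative solutions yields
$$U_{n,s,0,w} \;=\; \sum_{I\subseteq\{1,\ldots,n\}}(-1)^{|I|}\left|\bigcap_{j\in I}A_j\right|.$$
For a fixed $I$ of cardinality $i$, the shift $y_j=x_j-(w+1)$ for $j\in I$ and $y_j=x_j$ otherwise sets up a bijection between $\bigcap_{j\in I}A_j$ and the non-negative integer solutions of $y_1+\cdots+y_n=s-i(w+1)$. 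A second application of stars-and-bars then produces $\binom{s-i(w+1)+n-1}{n-1}$ solutions whenever $s\ge i(w+1)$, and no solutions otherwise.

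Since the count depends only on $|I|$, I would group the $\binom{n}{i}$ subsets of a given cardinality together, collapsing the sum to an alternating sum over $i$ and matching the displayed formula. The truncation $t=\min\bigl(n,\lfloor s/(w+1)\rfloor\bigr)$ records precisely the range of $i$ for which the summand can be non-zero: larger values of $i$ either exceed $n$ (so no index sets exist) or force $s-i(w+1)<0$ (so the binomial coefficient vanishes by convention). There is no genuine obstacle here, since this is a textbook bounded-composition count; the only real care needed is in verifying that terms beyond the cutoff $t$ contribute zero and in reconciling the indexing conventions of the statement.
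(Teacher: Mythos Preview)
Your inclusion--exclusion argument is correct and is the standard textbook derivation of this bounded-composition count. However, there is nothing to compare against: the paper does not supply a proof of this theorem at all. It is stated as a quoted result attributed to Wu~\cite{wu} and used as a black box to compute $|\B_{k,s,w}|$ in Theorem~\ref{th:family2}. (Incidentally, you have silently corrected the paper's typo: the statement writes $t=\min\bigl(n,\lfloor s/(b+1)\rfloor\bigr)$, where $b$ should of course be $w$.)
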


The following theorem describes the parameters of the codes of family $\B_{k,s,w}$. The proof is very similar to the proof of Theorem \ref{th:family1} and therefore we omit it. We used Theorem \ref{wu} in order to compute the number of codewords in $\B_{k,s,w}$.
\begin{theorem}
\label{th:family2}
Let $\B_{k,s,w}$ be the linear network code as in Definition \rm\ref{def:secondfamily}. 

Assume $k, s,w$ are positive integers satisfying $k>2g-2$ and $0<w\leq s\leq nw$.

Then $\B_{k,s,w}$ is a $\left[nkw+1-g, ks+1-g,\log_q U_{n,s,0,w},2k\right]$, where $U_{n,s,0,w}$ is defined in Equation \eqref{def_U}.
Also, normalized weight, rate, and normalized minimal distance are 
$$\lambda(\B_{k,s,w})=\frac{ks+1-g}{nkw+1-g}, \qquad R(\B_{k,s,w})=\frac{\log_q U_{n,s,0,w} }{(nkw+1-g)(ks+1-g)},$$
$$\delta(\B_{k,s,w})=\frac{1}{s+\frac{1-g}{k}}\ge\frac{2g-1}{(s+1)g-1}.$$
\end{theorem}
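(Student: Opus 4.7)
My plan is to mirror, almost verbatim, the argument used for Theorem~\ref{th:family1}: identify the ambient space, fix the common dimension of the codewords, count them, and bound the minimum distance from below and from above. The only genuine differences are the ambient divisor (here $kw\D$ rather than $ks\D$) and the counting of codewords, for which I would invoke Wu's formula (Theorem~\ref{wu}) instead of the multiset coefficient.

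First I would note that every $V\in\V_s^{\prime\prime}$ satisfies $V\leq w\D$ coefficient-wise, so every codeword $\LR(kV)$ sits inside $W=\LR(kw\D)$. Since $k>2g-2$ and $n,w\geq 1$ together give $nkw>2g-2$, the Riemann-Roch theorem yields $\dim_{\F_q}W=nkw+1-g$. Applied to $kV$, which has degree $ks\geq k>2g-2$, the same theorem gives each codeword dimension $ks+1-g$. The count $|\B_{k,s,w}|=U_{n,s,0,w}$ then follows from the bijection between $\V_s^{\prime\prime}$ and the solutions of $x_1+\cdots+x_n=s$ with $x_i\in\{0,\ldots,w\}$, combined with Theorem~\ref{wu}.

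For the distance I would reuse equation~\eqref{intersezione}, which still describes the intersection of two codewords, and the same Riemann-Roch reasoning pins down $\dim(V_1\cap V_2)=k\sum_P\min(m_P,\overline{m}_P)+1-g$. For distinct codewords the constraint $\sum_P m_P=\sum_P\overline{m}_P=s$ forces $\sum_P\min(m_P,\overline{m}_P)\leq s-1$, which immediately yields $d_s(V_1,V_2)\geq 2k$, exactly as in Theorem~\ref{th:family1}.

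The one step that is not automatic is the production of a matching extremal pair, since this is the place where the cap $m_P\leq w$ could interfere. In Theorem~\ref{th:family1} the witnesses $k((s-1)P+Q)$ and $k((s-1)P+R)$ lie in $\V_s^{\prime}$ but not in $\V_s^{\prime\prime}$ as soon as $s-1>w$. The fix I would use is to pick any $D_0\in\V_{s-1}^{\prime\prime}$ whose support avoids two chosen places $Q\neq R$, and set $V_1=\LR(k(D_0+Q))$, $V_2=\LR(k(D_0+R))$; the existence of such a $D_0$ is guaranteed by the hypothesis $s\leq nw$, which leaves at least two coordinates free. Verifying this feasibility is the only point where genuine care is required. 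All remaining quantities $\lambda(\B_{k,s,w})$, $R(\B_{k,s,w})$, $\delta(\B_{k,s,w})$, and the estimate $\delta\geq (2g-1)/((s+1)g-1)$, then follow formally from the definitions and the inequality $k\geq 2g-1$.
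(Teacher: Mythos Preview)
Your plan is precisely the paper's: the authors omit the proof of Theorem~\ref{th:family2} entirely, saying it is ``very similar to the proof of Theorem~\ref{th:family1}'' and that Theorem~\ref{wu} is used for the cardinality. Your outline reproduces that argument step by step, and you are right to flag that the extremal pair $k((s-1)P+Q)$, $k((s-1)P+R)$ from the proof of Theorem~\ref{th:family1} no longer lies in $\V_s^{\prime\prime}$ once $s-1>w$; the paper does not address this.

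There is one genuine slip in your proposed repair. Asking for $D_0\in\V_{s-1}^{\prime\prime}$ whose support avoids two fixed places $Q,R$ amounts to distributing $s-1$ units over $n-2$ places with cap $w$, i.e.\ it requires $s-1\le (n-2)w$. The hypothesis $s\le nw$ does \emph{not} guarantee this: for $n=3$, $w=2$, $s=5$ one has $s\le nw=6$ but $s-1=4>(n-2)w=2$, so no such $D_0$ exists. The fix is immediate: for any $s$ with $1\le s<nw$ there is some $V_1\in\V_s^{\prime\prime}$ possessing a place $Q$ with coefficient $\ge 1$ and a different place $R$ with coefficient $\le w-1$ (indeed, if $s<nw$ not every coefficient equals $w$, and if $s\ge 1$ not every coefficient equals $0$); setting $V_2=V_1-Q+R\in\V_s^{\prime\prime}$ gives $\sum_P\min(m_P,\overline m_P)=s-1$ and hence $d_s(\LR(kV_1),\LR(kV_2))=2k$. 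With this adjustment your argument is complete and matches the paper's intended proof.
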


\subsection{The family $\CF_{k,s,w}$}

Our last generalization takes into account the fact that allowing the divisors of the fixed degree $s$ to have also negative weights increases the number of codewords without changing the dimension of the ambient space. In order to compute the parameters of this new family we need the following corollary to Theorem \ref{wu}.

\begin{corollary}
\label{cor<>}
Let $a\leq 0\leq b$ be two integers, satisfying $b\leq s-b(n-1)$. The number of solution of the diophantine equation 
\begin{equation}\label{eq:sistema}
x_1+\dots+x_n=s,\qquad x_i \in \{a,\ldots,b\} \quad \forall \ i\in \{1,\dots,n\}
\end{equation}
is given by 
\begin{equation}\label{def:UU}
U^{\prime}_{n,s,a,b}=U_{n,s-na,0,b-a}=\sum_{i=0}^{t}(-1)^i\binom{n}{i}\binom{s-na-i(b-a+1)+n-1}{n-1},
\end{equation}
where $t=\min\left(n,\lfloor\frac{s-na}{b-a+1}\rfloor\right)$.
\end{corollary}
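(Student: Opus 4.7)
The plan is to reduce the diophantine problem with asymmetric range $\{a,\ldots,b\}$ to one with non-negative symmetric range $\{0,\ldots,b-a\}$, where Theorem \ref{wu} applies directly. Concretely, for each index $i \in \{1,\ldots,n\}$, I would set $y_i = x_i - a$. Since $a \leq x_i \leq b$, this translation gives $y_i \in \{0,1,\ldots,b-a\}$, so each $y_i$ is a non-negative integer bounded above by $b-a$.

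Summing over $i$ and using $\sum_i x_i = s$, equation \eqref{eq:sistema} transforms into
\[
y_1 + \dots + y_n = s - na, \qquad y_i \in \{0,\ldots,b-a\}.
\]
The map $(x_1,\ldots,x_n) \mapsto (y_1,\ldots,y_n)$ is an explicit bijection between the two solution sets, so the number of solutions of \eqref{eq:sistema} equals the quantity denoted $U_{n,s-na,0,b-a}$ in Theorem \ref{wu}. This already yields the first equality in \eqref{def:UU}.

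For the second equality, I would invoke Theorem \ref{wu} directly with $n$, $s-na$, $b-a$ playing the roles of $n$, $s$, $w$ respectively, after verifying that the standing hypotheses $a \leq 0 \leq b$ and $b \leq s - b(n-1)$ imply the required $0 < b-a \leq s-na \leq n(b-a)$. Applying the formula of Theorem \ref{wu} then gives
\[
U_{n,s-na,0,b-a} = \sum_{i=0}^{t}(-1)^i\binom{n}{i}\binom{s-na-i(b-a+1)+n-1}{n-1},
\]
with $t = \min\bigl(n,\lfloor (s-na)/(b-a+1)\rfloor\bigr)$, which is exactly the right-hand side of \eqref{def:UU}.

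The substance of the argument is the one-line affine substitution; the only matter requiring care is the bookkeeping to confirm that the translated parameters fall into the range where Theorem \ref{wu} is valid. There is no genuine obstacle beyond this verification.
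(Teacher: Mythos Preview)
Your proof is correct and follows exactly the same route as the paper: the affine substitution $y_i=x_i-a$ reduces \eqref{eq:sistema} to the equation $y_1+\dots+y_n=s-na$ with $y_i\in\{0,\ldots,b-a\}$, to which Theorem~\ref{wu} applies directly. Your version is in fact slightly more careful than the paper's, since you explicitly note the need to check that the translated parameters satisfy the hypotheses of Theorem~\ref{wu}.
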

\begin{proof}
First note that Equation \eqref{eq:sistema} is equivalent to 
\[
y_1+\dots+y_n=s-na,\quad y_i\in \{0, b-a\}\,\,\forall \ i\in \{1,\dots,n\}.
\]
By Theorem \ref{wu}, the number of solutions of this last equation is 
\begin{equation*}
U^{\prime}_{n,s,a,b}=\sum_{i=0}^{t}(-1)^i\binom{n}{i}\binom{s-na-i(b-a+1)+n-1}{n-1},
\end{equation*}
where $t=\min\left(n,\lfloor\frac{s-na}{b-a+1}\rfloor\right)$.
\end{proof}

\begin{definition}\label{def:thirdfamily}
Let $k,s,w$ be fixed positive integers, with $0<w\leq s\leq nw$. Let 
$$\V_s^{\prime\prime\prime}=\left\{ \sum_{P\in \X(\F_q)}m_{P} P \ \Big| \  \sum_{P\in \X(\F_q)}m_{P}=s, \quad m_P \in \{s-w(n-1),\ldots,w\}\right\}.$$
The family $\CF_{k,s,w}$ is given by
\begin{equation}
\label{codicehansen}
\CF_{k,s,w}=\left\{\LR\left(kV\right) \ | \ V \in \V_{s}^{\prime\prime\prime}\right\}.
\end{equation}
\end{definition}
\begin{remark}
In Definition \ref{def:thirdfamily} we restrict ourself to the case  $m_P \in \{s-w(n-1),\ldots,w\}$ since if for some $\overline{P} \in \X(\F_q)$ such that $m_{\overline{P}}<s-w(n-1)$, then $\sum_{P\in \X(\F_q)}m_{P} = \sum_{P\neq \overline{P}\in \X(\F_q)}m_{P}+m_{\overline{P}}<(n-1)w+s-w(n-1)=s$.
\end{remark}

\begin{theorem}
\label{th:family3}
Let $\CF_{k,s,w}$ be the linear network code as in Definition \rm\ref{def:thirdfamily}. 

Assume $k, s,w$ are positive integers satisfying $k>2g-2$ and $0<w\leq s\leq nw$.

Then $\CF_{k,s,w}$ is a $\left[nkw+1-g, ks+1-g,\log_q U^{\prime}_{n,s,s-w(n-1),w},2k\right]$, where $U^{\prime}_{n,s,s-w(n-1),w}$ is defined in Equation \eqref{def:UU}.
Also, normalized weight, rate, and normalized minimal distance are 
$$\lambda(\CF_{k,s,w})=\frac{ks+1-g}{nkw+1-g}, \qquad R(\CF_{k,s,w})=\frac{\log_q U^{\prime}_{n,s,s-w(n-1),w} }{(nkw+1-g)(ks+1-g)},$$
$$\delta(\CF_{k,s,w})=\frac{1}{s+\frac{1-g}{k}}\ge\frac{2g-1}{(s+1)g-1}.$$
\end{theorem}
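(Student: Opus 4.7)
The plan is to mirror the proof of Theorem~\ref{th:family1}, substituting the ambient space $\LR(ks\D)$ with $\LR(kw\D)$ (just as for $\B_{k,s,w}$ in Theorem~\ref{th:family2}) and using Corollary~\ref{cor<>} for the enumeration. First, since every $V \in \V_{s}^{\prime\prime\prime}$ satisfies $m_P \leq w$ coefficientwise, we have $\LR(kV) \subseteq \LR(kw\D)$. The hypothesis $k > 2g-2$ together with $n, w \geq 1$ gives $nkw > 2g-2$, so Riemann--Roch yields $\dim_{\F_q} \LR(kw\D) = nkw + 1 - g$. Similarly, $\deg(kV) = ks \geq k > 2g-2$ gives $\dim_{\F_q} \LR(kV) = ks + 1 - g$ for each codeword. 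The cardinality $|\CF_{k,s,w}|$ equals the number of integer solutions of $\sum_{i=1}^n x_i = s$ with $x_i \in \{s-w(n-1), \ldots, w\}$, which by Corollary~\ref{cor<>} with $a = s-w(n-1)$ and $b = w$ is exactly $U^{\prime}_{n,s,s-w(n-1),w}$.

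For the minimum distance, I would reuse the intersection identity from the proof of Theorem~\ref{th:family1},
$$V_1 \cap V_2 = \LR\left(k \sum_{P \in \X(\F_q)} \min(m_P, \overline{m}_P) P\right),$$
which carries over verbatim because it depends only on the characterization of Riemann--Roch spaces by pole-order conditions at each place and not on the sign of the coefficients. Setting $m = \sum_P \min(m_P, \overline{m}_P)$, the identity $\sum m_P = \sum \overline{m}_P = s$ combined with $V_1 \neq V_2$ forces $m \leq s-1$. Since $km \leq k(s-1)$ and $k(s-1) > 2g-2$, Riemann--Roch gives $\dim(V_1 \cap V_2) = km + 1 - g$, whence $d_s(V_1, V_2) = 2k(s-m) \geq 2k$. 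The value $2k$ is achieved by choosing any admissible divisor and transferring a single unit of mass between two places whose weights lie strictly inside $\{s-w(n-1), \ldots, w\}$; this always exists under the stated hypotheses and produces a pair with $m = s-1$.

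The main obstacle I anticipate is ensuring Riemann--Roch delivers the \emph{exact} intersection dimension when the coefficients $m_P$ can be negative: unlike in Theorem~\ref{th:family1}, here $\min(m_P, \overline{m}_P)$ may be negative and $m$ can a priori be smaller than in the non-negative case, so one must either verify the degree condition $km > 2g-2$ at the critical configurations or replace the equality by the upper bound $\dim \LR(kD') \leq g$ obtained via Clifford (equivalently, via $\ell(E) \leq \deg E + 1$ applied to an effective representative), which together with $k > 2g-2$ still yields $d_s \geq 2k$. Once the four parameters length, dimension, size, and distance are in hand, the normalized weight, rate, and normalized minimum distance follow by direct substitution, and the lower bound $\delta(\CF_{k,s,w}) \geq (2g-1)/((s+1)g-1)$ is the same estimate already used in Theorem~\ref{th:family1}.
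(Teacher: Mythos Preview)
Your proposal is correct and follows exactly the approach the paper takes: the paper's own proof simply says the argument is ``very similar to those of Theorem~\ref{th:family1} and Theorem~\ref{th:family2}'' and then records the simplified closed form of $U^{\prime}_{n,s,s-w(n-1),w}$, so your plan to mirror those proofs with ambient space $\LR(kw\D)$ and to invoke Corollary~\ref{cor<>} for the count is precisely what is intended. You are in fact more careful than the paper in flagging the one genuine subtlety---that $\sum_P\min(m_P,\overline m_P)$ can now be negative so Riemann--Roch may not give the exact intersection dimension---and your suggested fix via the elementary bound $\dim\LR(D)\le\max(0,\deg D+1)$ is the right way to close that gap (note, though, that the sentence ``since $km\le k(s-1)$ and $k(s-1)>2g-2$, Riemann--Roch gives $\dim(V_1\cap V_2)=km+1-g$'' is not valid as written; use the upper bound instead, as you yourself point out).
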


\begin{proof}
The proof is very similar to those of Theorem \ref{th:family1} \and Theorem \ref{th:family2}. We note that in this case $U^{\prime}_{n,s,s-w(n-1),w}$ reads 
$$\sum_{i=0}^{t}(-1)^i\binom{n}{i}\binom{(nw-s+1)(n-i-1)}{n-1},
$$
where $t=\min\left(n,\left\lfloor\frac{(nw-s)(n-1)}{nw-s+1)}\right\rfloor\right)=\left\lfloor\frac{(nw-s)(n-1)}{nw-s+1}\right\rfloor$.

\end{proof}

\section{Some comparisons}
In this section we present some computations on the rates of the three  families described in the paper and of the family $\HF_{k,s}$ (see Definition \ref{def:Hansen}). Due to the shape of the formula
$\log_q U^{\prime}_{n,s,s-w(n-1),w}$ in Theorem \ref{th:family3}, we gave some restrictions on the values of the parameters $n,s,w$, in order to handle it. In Table \ref{table:1} we summarize the normalized weight, the rate, and the normalized minimal distance of the four families.

In particular, we focused on their rates. Also,  we consider curves $\X$ of genus $1$: this simplifies the formulas, as shown in Table \ref{table:2}. Due to the difficulty of the approximation of the quantity defined in Formulas \eqref{def_U} and \eqref{def:UU} we could produce the exact values of rates just for small values of $n=|\X(\F_q)|$ and the parameter $s$. These results, for $q=16$, $n=15$, $1\leq s <n$, $w=3$, $k=5$, are summarized in Table \ref{table:3}. Note that in many cases the rate of the third family $\CF_{k,s,w}$ is larger than the rate of $\HF_{k,s}$. It is worth noting that asymptotic formulas for \eqref{def_U} and \eqref{def:UU} would help for the comparisons of the rates of the four families. Finally, note that whereas the parameter $s$ in $\HF_{k,s}$ is upper bounded by $n=|\X(\F_q)|$, in the three families $\A_{k,s}$, $\B_{k,s,w}$, $\CF_{k,s,w}$ we can always consider $s>n$ too.

\begin{table}
\caption{Normalized weight, rate, and  normalized minimal distance}\label{table:1}
\begin{center}
\begin{tabular}{|c|c|c|c|}
\hline
&&&\\[-3 mm]
&\begin{tabular}{c}
{Normalized}\\ 
{weight}\\ 

\end{tabular}
&
\begin{tabular}{c}
{Rate}\\

\end{tabular}
&
\begin{tabular}{c}
{Normalized}\\
{minimum}\\
{distance}\\
\end{tabular}\\
&&&\\[-3 mm]
\hline
&&&\\[-3 mm]
$\HF_{k,s}$&$\frac{ks+1-g}{nk+1-g}$&$\frac{\log_q \binom{n}{s}}{(nk+1-g)(ks+1-g)}$&$ \frac{1}{s+\frac{1-g}{k}}$\\
&&&\\[-3 mm]
\hline
&&&\\[-3 mm]
$\A_{k,s}$&$\frac{ks+1-g}{nks+1-g}$&$\frac{\log_q \binom{n+s-1}{s}}{(nks+1-g)(ks+1-g)}$&$\frac{1}{s+\frac{1-g}{k}}$\\
&&&\\[-3 mm]
\hline
&&&\\[-3 mm]
$\B_{k,s,w}$&$\frac{ks+1-g}{nkw+1-g}$&$\frac{\log_q U_{n,s,0,w}}{(nkw+1-g)(ks+1-g)}$&$\frac{1}{s+\frac{1-g}{k}}$\\
&&&\\[-3 mm]
\hline
&&&\\[-3 mm]
$\CF_{k,s,w}$&$\frac{ks+1-g}{nkw+1-g}$&$\frac{\log_q U^{\prime}_{n,s,s-w(n-1),w}}{(nkw+1-g)(ks+1-g)}$&$\frac{1}{s+\frac{1-g}{k}}$\\[2 mm]

\hline
\end{tabular}
\end{center}
\end{table}

\begin{table}
\caption{Normalized weight, rate, and  normalized minimal distance for $g=1$}\label{table:2}
\begin{center}
\begin{tabular}{|c|c|c|c|}
\hline
&&&\\[-3 mm]
&\begin{tabular}{c}
{Normalized}\\ 
{weight}\\ 

\end{tabular}
&
\begin{tabular}{c}
{Rate}\\

\end{tabular}
&
\begin{tabular}{c}
{Normalized}\\
{minimum}\\
{distance}\\
\end{tabular}\\
&&&\\[-3 mm]
\hline
&&&\\[-3 mm]
$\HF_{k,s}$&$\frac{s}{n}$&$\frac{\log_q \binom{n}{s}}{nk^2s}$&$ \frac{1}{s}$\\
&&&\\[-3 mm]
\hline
&&&\\[-3 mm]
$\A_{k,s}$&$\frac{1}{s}$&$\frac{\log_q \binom{n+s-1}{s}}{nk^2s^2}$&$\frac{1}{s}$\\
&&&\\[-3 mm]
\hline
&&&\\[-3 mm]
$\B_{k,s,w}$&$\frac{s}{nw}$&$\frac{\log_q U_{n,s,0,w}}{nk^2ws}$&$\frac{1}{s}$\\
&&&\\[-3 mm]
\hline
&&&\\[-3 mm]
$\CF_{k,s,w}$&$\frac{s}{nw}$&$\frac{\log_q U^{\prime}_{n,s,s-w(n-1),w}}{nk^2ws}$&$\frac{1}{s}$\\[2 mm]

\hline
\end{tabular}
\end{center}
\end{table}

\begin{table}
\caption{Rates of $\HF_{k,s}$, $\A_{k,s}$, $\B_{k,s,w}$, $\CF_{k,s,w}$ for $q=16$, $8\leq n\leq 14$, $1\leq s <n$, $w=3$, $k=5$}\label{table:3}
\begin{center}
{\tiny
\tabcolsep = 1 mm
\begin{tabular}{|c|c|c|c|c|}
\hline
$(n,s)$& $\HF_{k,s}$ & $\A_{k,s}$ & $\B_{k,s,w}$ & $\CF_{k,s,w}$\\
\hline
$(8,1)$ & $0.003750$  & $0.003750$ & $0.001250$ & $0.008732$\\ 
$(8,2)$ & $0.003005$  & $0.001616$ & $0.001077$ & $0.004286$\\ 
$(8,3)$ & $0.002420$  & $0.000959$ & $0.000959$ & $0.002802$\\ 
$(8,4)$ & $0.001915$  & $0.000654$ & $0.000868$ & $0.002058$\\ 
$(8,5)$ & $0.001452$  & $0.000481$ & $0.000792$ & $0.001611$\\ 
$(8,6)$ & $0.001002$  & $0.000373$ & $0.000728$ & $0.001311$\\ 
$(8,7)$ & $0.000536$  & $0.000300$ & $0.000671$ & $0.001095$\\ 
$(9,1)$ & $0.003522$  & $0.003522$ & $0.001174$ & $0.008931$\\ 
$(9,2)$ & $0.002872$  & $0.001526$ & $0.001017$ & $0.004394$\\ 
$(9,3)$ & $0.002368$  & $0.000909$ & $0.000909$ & $0.002880$\\ 
$(9,4)$ & $0.001938$  & $0.000622$ & $0.000826$ & $0.002121$\\ 
$(9,5)$ & $0.001551$  & $0.000459$ & $0.000758$ & $0.001665$\\ 
$(9,6)$ & $0.001184$  & $0.000357$ & $0.000700$ & $0.001360$\\ 
$(9,7)$ & $0.000821$  & $0.000287$ & $0.000649$ & $0.001141$\\ 
$(9,8)$ & $0.000440$  & $0.000237$ & $0.000604$ & $0.000976$\\ 
$(10,1)$ & $0.003322$  & $0.003322$ & $0.001107$ & $0.009093$\\ 
$(10,2)$ & $0.002746$  & $0.001445$ & $0.000964$ & $0.004482$\\ 
$(10,3)$ & $0.002302$  & $0.000865$ & $0.000865$ & $0.002943$\\ 
$(10,4)$ & $0.001929$  & $0.000593$ & $0.000788$ & $0.002173$\\ 
$(10,5)$ & $0.001595$  & $0.000439$ & $0.000726$ & $0.001710$\\ 
$(10,6)$ & $0.001286$  & $0.000341$ & $0.000673$ & $0.001400$\\ 
$(10,7)$ & $0.000987$  & $0.000275$ & $0.000627$ & $0.001178$\\ 
$(10,8)$ & $0.000686$  & $0.000228$ & $0.000586$ & $0.001011$\\ 
$(10,9)$ & $0.000369$  & $0.000192$ & $0.000549$ & $0.000880$\\ 

$(11,1)$ & $0.003145$  & $0.003145$ & $0.001048$ & $0.009229$\\ 
$(11,2)$ & $0.002628$  & $0.001374$ & $0.000916$ & $0.004555$\\ 
$(11,3)$ & $0.002232$  & $0.000824$ & $0.000824$ & $0.002996$\\ 
$(11,4)$ & $0.001901$  & $0.000566$ & $0.000754$ & $0.002215$\\ 
$(11,5)$ & $0.001609$  & $0.000420$ & $0.000697$ & $0.001746$\\ 
$(11,6)$ & $0.001341$  & $0.000327$ & $0.000648$ & $0.001433$\\ 
$(11,7)$ & $0.001087$  & $0.000264$ & $0.000606$ & $0.001209$\\ 
$(11,8)$ & $0.000837$  & $0.000219$ & $0.000568$ & $0.001040$\\ 
$(11,9)$ & $0.000584$  & $0.000185$ & $0.000534$ & $0.000908$\\ 
$(11,10)$ & $0.000314$  & $0.000159$ & $0.000503$ & $0.000802$\\ 

$(12,1)$ & $0.002987$  & $0.002987$ & $0.000996$ & $0.009343$\\ 
\hline
\end{tabular}
\begin{tabular}{|c|c|c|c|c|}
\hline
$(n,s)$& $\HF_{k,s}$ & $\A_{k,s}$ & $\B_{k,s,w}$ & $\CF_{k,s,w}$\\
\hline
$(12,2)$ & $0.002518$  & $0.001309$ & $0.000873$ & $0.004617$\\ 
$(12,3)$ & $0.002161$  & $0.000788$ & $0.000788$ & $0.003040$\\ 
$(12,4)$ & $0.001865$  & $0.000542$ & $0.000722$ & $0.002251$\\ 
$(12,5)$ & $0.001605$  & $0.000403$ & $0.000669$ & $0.001778$\\ 
$(12,6)$ & $0.001368$  & $0.000315$ & $0.000624$ & $0.001461$\\ 
$(12,7)$ & $0.001146$  & $0.000254$ & $0.000585$ & $0.001234$\\ 
$(12,8)$ & $0.000932$  & $0.000211$ & $0.000551$ & $0.001064$\\ 
$(12,9)$ & $0.000720$  & $0.000179$ & $0.000519$ & $0.000931$\\ 
$(12,10)$ & $0.000504$  & $0.000154$ & $0.000491$ & $0.000824$\\ 
$(12,11)$ & $0.000272$  & $0.000134$ & $0.000465$ & $0.000736$\\ 
$(13,1)$ & $0.002846$  & $0.002846$ & $0.000949$ & $0.009440$\\ 
$(13,2)$ & $0.002417$  & $0.001251$ & $0.000834$ & $0.004670$\\ 
$(13,3)$ & $0.002092$  & $0.000755$ & $0.000755$ & $0.003079$\\ 
$(13,4)$ & $0.001823$  & $0.000521$ & $0.000694$ & $0.002282$\\ 
$(13,5)$ & $0.001589$  & $0.000388$ & $0.000644$ & $0.001804$\\ 
$(13,6)$ & $0.001378$  & $0.000303$ & $0.000602$ & $0.001485$\\ 
$(13,7)$ & $0.001181$  & $0.000245$ & $0.000566$ & $0.001256$\\ 
$(13,8)$ & $0.000993$  & $0.000204$ & $0.000533$ & $0.001085$\\ 
$(13,9)$ & $0.000810$  & $0.000173$ & $0.000505$ & $0.000950$\\ 
$(13,10)$ & $0.000628$  & $0.000148$ & $0.000478$ & $0.000843$\\ 
$(13,11)$ & $0.000440$  & $0.000129$ & $0.000454$ & $0.000755$\\ 
$(13,12)$ & $0.000237$  & $0.000114$ & $0.000432$ & $0.000681$\\ 
$(14,1)$ & $0.002720$  & $0.002720$ & $0.000907$ & $0.009486$\\ 
$(14,2)$ & $0.002324$  & $0.001199$ & $0.000799$ & $0.004705$\\ 
$(14,3)$ & $0.002026$  & $0.000725$ & $0.000725$ & $0.003102$\\ 
$(14,4)$ & $0.001780$  & $0.000501$ & $0.000667$ & $0.002307$\\ 
$(14,5)$ & $0.001567$  & $0.000373$ & $0.000621$ & $0.001827$\\ 
$(14,6)$ & $0.001375$  & $0.000292$ & $0.000581$ & $0.001511$\\ 
$(14,7)$ & $0.001198$  & $0.000237$ & $0.000547$ & $0.001252$\\ 
$(14,8)$ & $0.001031$  & $0.000197$ & $0.000517$ & $0.001106$\\ 
$(14,9)$ & $0.000870$  & $0.000167$ & $0.000490$ & $0.000974$\\ 
$(14,10)$ & $0.000712$  & $0.000144$ & $0.000466$ & $0.000865$\\ 
$(14,11)$ & $0.000552$  & $0.000125$ & $0.000443$ & $0.000768$\\ 
$(14,12)$ & $0.000387$  & $0.000111$ & $0.000422$ & $0.000699$\\ 
$(14,13)$ & $0.000209$  & $0.000099$ & $0.000403$ & $0.000635$\\ \hline

\end{tabular}
}
\end{center}
\end{table}

\end{document}